

\documentclass[conference]{IEEEtran}
\IEEEoverridecommandlockouts
\makeatletter
\def\ps@headings{%
\def\@oddhead{\mbox{}\scriptsize\rightmark \hfil \thepage}%
\def\@evenhead{\scriptsize\thepage \hfil \leftmark\mbox{}}%
\def\@oddfoot{}%
\def\@evenfoot{}}
\makeatother
\pagestyle{headings}

%

%
\usepackage{cite}

%
\ifCLASSINFOpdf
   \usepackage[pdftex]{graphicx}
\else
   \usepackage[dvips]{graphicx}
\fi
%

%
\usepackage[cmex10]{amsmath}
%
\interdisplaylinepenalty=2500

%
\usepackage[noend]{algorithmic}

%
\usepackage{array}

\usepackage{mdwmath}
\usepackage{mdwtab}
\usepackage{url}


\hyphenation{op-tical net-works semi-conduc-tor}

\usepackage{boxedminipage}
\newtheorem{lemma}{Lemma}

\newtheorem{theorem}{Theorem}

\DeclareGraphicsExtensions{.eps}

\newcommand{\degree}{\mathrm{degree}}

\newcommand{\G}{\mathrm{G}}
\newcommand{\V}{\mathrm{V}}
\newcommand{\E}{\mathrm{E}}
\newcommand{\h}{\mathrm{h}}

\begin{document}
%

 \title{Edge-preserving self-healing: keeping  network backbones densely connected} 



\author{\IEEEauthorblockN{Atish Das Sarma}
\IEEEauthorblockA{Google Research, Google Inc.\\
 Mountain View, CA, USA - 94041.\\
 E-mail:  atish.dassarma@gmail.com}
\and
\IEEEauthorblockN{Amitabh Trehan \thanks{Supported in part at the Technion by a fellowship of the Israel Council for Higher Education.}}
\IEEEauthorblockA{Faculty of Industrial Engineering and Management,\\
 Technion - Israel Institute of Technology,\\
  Haifa, Israel - 32000. \\
E-mail: amitabh.trehaan@gmail.com }
}


%


\maketitle

\begin{abstract}
Healing algorithms play a crucial part in distributed peer-to-peer networks where failures occur continuously and frequently. The general goal of self-healing distributed graphs is to maintain good connectivity throughout the network. This comes with the constraint that with every failure, one is allowed to only make bounded alterations locally. Several self-healing algorithms have been suggested in the recent literature [IPDPS'08, PODC'08, PODC'09, PODC'11] in a line of work that has yielded gradual improvements in the properties ensured on the graph. The competing requirements normally imposed are that of maintaining small degrees, while ensuring high connectivity in terms of shortest path dilation. In a recent work in PODC'11, an additional requirement on {\em expansion} was added, and an improved self-healing algorithm for maintaining the same was presented. This work motivates a strong general phenomenon of {\em edge-preserving} healing that aims at obtaining self-healing algorithms with the constraint that all original edges in the graph (not deleted by the adversary), be retained in {\em every} intermediate graph. This naturally further restricts the ability to add new edges during failures, due to the degree bound constraints. 

None of the previous algorithms, in their nascent form, are explicitly edge preserving. In this paper we show that the previous algorithms can be suitably modified with very simple changes such that all the previous properties are maintained, and in addition, the algorithms are {\em edge-preserving}. Towards this end, we present a general self-healing model that unifies the previous algorithms and shall hopefully be a definitive model. The main contribution of this paper is not in the technical complexity, rather in the simplicity with which the edge-preserving property can be ensured and the message that this is a crucial property with several benefits. In particular, we highlight the power of {\em edge-preserving} self-healing algorithms by showing that, almost as an immediate corollary, subgraph densities are preserved or increased. Maintaining density is a notion that is clearly motivated by the fact that in certain distributed networks, certain nodes may require and initially have a larger number of inter-connections (perhaps due to their larger bandwidth/communication requirements). It is vital that a healing algorithm, even amidst failures, respect these requirements; this is something that was not guaranteed by any of the previous algorithms. Our suggested modifications yield such subgraph density preservation as a by product. In addition, edge preservation helps maintain any subgraph induced property that is monotonic. Also, algorithms that are edge-preserving require minimal alteration of edges which can be an expensive cost in healing - something that has not been modeled in any of the past work. All the algorithms and proofs presented are simple and yet powerful enough to guarantee edge preservation in addition to all previous requirements.

\end{abstract}

\IEEEpeerreviewmaketitle

\section{Introduction}

With the advent of the internet, the growth of large communication networks, and the staggering rate of interaction and data exchange, the need for self maintaining distributed networks has grown tremendously. While such huge networks provide an excellent backbone for processing and exchanging data at a scale that was unimaginable before, they also call for radically improved and novel techniques that are efficient and fault tolerant at the same magnitude. At this scale, managing resources centrally is untenable. It is imperative that we design distributed and localized healing algorithms for failures, that achieve and maintain all desired global connectivity properties. The challenge lies in several dimensions: (a) localized distributed algorithms have the inability to look far into the network and so maintaining global properties is unclear, (b) several of the properties desired may themselves be conflicting on first sight such as ensuring upper bounds on degrees while maintaining low stretch and high expansion (c) the rate and nature of failures can be completely arbitrary (or adversarial) and thus, hard to predict, and finally yet very importantly (d) reacting to failures by deleting and adding new communication edges can be
expensive in both cost and time.  

This line of work adopts a responsive approach, in the sense that it responds to an attack (or component failure) by changing the topology of the network. This approach works irrespective of the
initial state of the network, and is thus orthogonal and complementary to traditional non-responsive techniques. In this setting, these papers seek to address the important and challenging problem of efficiently and responsively maintaining global invariants in a localized, distributed manner.

Healing algorithms play a crucial part in distributed peer-to-peer networks where failures occur continuously and frequently. Several self-healing algorithms have been suggested in the recent literaureture~\cite{SaiaTrehanIPDPS08, HayesPODC08, HayesPODC09,Amitabh-2010-PhdThesis,PanduranganPODC11} in a line of work that has yielded gradual improvements in the properties ensured on the graph. The competing requirements normally imposed are that of maintaining small degrees, while ensuring high connectivity in terms of shortest path dilation. In a recent work~\cite{PanduranganPODC11}, an additional requirement of {\em expansion} was added, and an improved self-healing algorithm for maintaining the same was presented. This work motivates a strong general phenomenon of {\em edge-preserving} healing that aims at obtaining self-healing algorithms with the constraint that all the original edges in the graph (not deleted by the adversary), be retained in {\em every} intermediate graph. This naturally further restricts the ability to add new edges during failures, due to the degree bound constraints. This is the primary focus of our paper and we motivate this further in the next paragraph.

Several algorithms have been designed to obtain self-healing in such distributed networks, and in a series of papers, some of these concerns have been addressed. One challenge in particular, however, has received little attention: namely, the cost of deleting and adding new edges. Past work has bounded this cost by restricting the number of edges that can be deleted or added. However, none of the works appeal to the concept of {\em minimizing} the edge {\em deletions} of the network, at any healing stage. The most recent algorithm, and arguably the best for this problem,  suffers from requiring deletion of edges with {\em every} possible local healing step. In this paper, we address this question by designing {\em edge-preserving} healing algorithms: we require that a self-healing algorithm not delete {\em any} edge that was originally present or was inserted during a node insertion  into the network. We however do allow deleting edges that were added in subsequent localized healing steps. While the edge-preserving property helps us minimize, or rather completely eliminate, the need to rewire communication paths originally present in any network, surprisingly, we can show much more. We are able to not only prove all the previous guarantees on all other requirements, but also show that edge-preservation immediately leads to several other very desirable global properties. Most notably, this helps us preserve the {\em density} of {\em every} current induced subgraph between nodes that were present originally in the network. In fact, edge-preserving self-healing ensures that every subgraph property that is edge-monotone i.e. non-decreasing with an increase in the number of edges does not suffer. Density is an important example of such a property.

Density is a very well studied notion in graphs and is an excellent measure of the inter-connectivity between groups of nodes. Density measures the {\em strength} of a set of nodes by the graph induced on them from the overall structure. The power of density lies in locally observing the strength of {\em any} set of nodes, large or small, independent of the entire network. Expansion, on the other hand measures the connection between a set of nodes and the rest of the network. Therefore, density constraints would be able to assess and better maintain the strength between specified (or all) subsets of nodes (even though there are exponential such sets) in a more robust manner. Consider for example a distributed network that contains a small set $S$ of $k$ nodes where these $k$ nodes are {\em hubs}, or central and crucial to the overall backbone of the network. It is conceivable and even likely that they would incur a larger communication interaction between them, and therefore demand larger connectivity structure, lower latency, and higher resilience to failures. Therefore, a peer to peer network would have a larger number of connections between these $k$ nodes as compared to any random set of $k$ nodes. It is then crucial that any self-healing algorithm retain this stronger interconnection within $S$, and not accidently compromise these connections for increasing connectivity at less desired places. Ensuring that the {\em density} between these nodes is preserved guarantees higher tolerance to failures, lower latencies and more efficient communication paths between the important set of nodes, namely $S$. 
Unfortunately not all previous healing algorithms were able to ensure this density requirement even if it were present in the original network; the healing stages would inadvertently compromise such hidden dense structures for more smoothed global connectivity properties (since expansion and other previous requirements did not capture this notion of induced density). The algorithms we present here are able to maintain (or even further improve) the density of {\em every} subset of nodes from the initial network, and through all stages of the self-healing steps.
 Surprisingly, we are able to achieve this without compromising on any of the other requirements. Therefore, in addition to all subgraph densities, our paper continues to satisfy degree constraints, preserve or even improve expansion, while maintaining low stretch, like the previous paper. 
 

Our algorithms are simple and  largely exploit the previous  algorithms, yet guarantee the strong {\em edge-preserving} property. The papers on this topic admit that they need the network to be {\em reconfigurable} in that edges may need to be continuously deleted and added.  However, continuous deletion and addition of edges can be costly.
With the necessity to delete original edges eliminated, the applicability of our algorithms suggested in this paper extends more effectively to a larger class of networks including peer-to-peer, wireless mesh, and ad-hoc computer networks, and infrastructure networks (e.g. airline transportation). Most of these networks, while dynamic, would incur initiation and termination costs for new connections and dropping connections, whenever making an alteration in the networks. Our algorithms also help these networks retain their backbone connections and thereby minimize any reconstruction/rconfiguration costs.

\medskip
\noindent {\bf Our Model:}
Our model - \emph{The General Self-healing model} is a generalization of the models used in ~\cite{PanduranganPODC11,Amitabh-2010-PhdThesis,HayesPODC09,HayesPODC08,SaiaTrehanIPDPS08}. We describe it here briefly.  We assume that the network is initially a connected graph over $n$ nodes.  An adversary repeatedly attacks the network. This adversary knows the network topology and our algorithm, and it has the ability to delete arbitrary nodes from the  network or insert a new node in the system which it can connect to any subset of  nodes currently in the system.  
However, we assume the adversary is constrained in that in any time step it can only delete or insert a single node. In this context, the self-healing algorithm is supposed to fulfill certain success metrics, according to the particular requirements of the problem.
The detailed model is described in Section~\ref{sec: Xmodel}.

\medskip
\noindent {\bf Our Contributions.}
\begin{itemize}
\item We introduce edge preservation as a novel consideration for healing algorithms, motivated by the cost associated with switching physical network communication lines. We show edge preservation not only reduces the cost but also results in desirable network structural consequences such as preserving or improving various monotonic graph properties. In particular, we motivate and use density as a running example for an interesting monotonic property. We then consider density as a novel constraint for the design of self-healing algorithms.
\item We generalize and strengthen previous algorithms to consider edge-preservation, and in particular maintaining or increasing all subgraph densities, as a constraint to the algorithmic problem. This enforces the constraint of maintaining any strong connectivity structure between hub nodes, or nodes with high bandwidth/communication.
\item Finally, we present proofs for the unified algorithm presented in this paper by showing strong guarantees on the density constraint, as well as show how the previously considered measures, such as connectivity, diameter, degree constraint, network stretch, and expansion, continue to hold without being compromised. Our proofs are simple and the algorithm and proof essentially fall out from a fairly simple generalization of the previous techniques by a small modification that enforces the edge-preserving property.
\end{itemize}
 
\noindent {\bf Related Work:} 
This work builds upon previous works of self-healing. It investigates the algorithms discussed in~\cite{PanduranganPODC11,Amitabh-2010-PhdThesis,HayesPODC09,HayesPODC08,SaiaTrehanIPDPS08}; In particular, we introduce an edge-preserving (formally defined later) version of the algorithm \emph{Xheal}  and show that this new version self-heals subgraph density.

These works show a progressive increase in the number of properties self-healed and increasing sophistication in the techniques used. The earliest works~\cite{SaiaTrehanIPDPS08,BomanSAS06} maintained connectivity while ensuring low degree increase. Later, Hayes, Rustagi, Saia and Trehan introduced the \emph{Forgiving Tree}~\cite{HayesPODC08}, which is a tree maintenance algorithm, that while being much more efficient also maintained the diameter of the network. However, it does not handle insertion (this is still an open problem in the tree maintenance setting). Hayes, Saia and Trehan later introduced the \emph{Forgiving Graph}~\cite{HayesPODC09}, which improved upon Forgiving Tree by not only handling insertions but also maintaining network stretch (which is a stronger property than network diameter). All the above algorithms use tree like structures for self-healing; Pandurangan and Trehan instead use expander structures in  Xheal~\cite{PanduranganPODC11} and show that they can also self-heal Network expansion in addition to the previous properties. Our work augments these previous works by adding a new desirable property for self-healing algorithms,  adding subgraph density to the list of properties maintained, analyzing the previous algorithms for these properties and modifying them (when required).

Further, the problem of finding densest subgraphs is well-studied. Finding a maximum density subgraph on an undirected graph can be solved in polynomial time~\cite{G84, L}. However, the problem becomes NP-hard when a size restriction is enforced. In particular, finding a maximum density subgraph of size exactly $k$ is NP-hard~\cite{AHI, FKP} and no approximation scheme exists under a reasonable complexity assumption~\cite{K}. Khuller and Saha~\cite{KS} considered the problem of finding densest subgraphs with size restrictions and showed that these are NP-hard. Khuller and Saha ~\cite{KS} and also Andersen and Chellapilla ~\cite{AC} gave constant factor approximation algorithms. There are several more papers related to finding dense subgraphs, both theoretical and practical, but we avoid listing a comprehensive survey of this literature and refer the reader to the references in the aforementioned papers. The nice aspect of our self-healing algorithms approach is that it completely bypasses the problem of actually computing dense subgraphs; the density of any subgraph is simply retained (or increased) as an immediate consequence of the edge-preserving property.

\subsection{Preliminaries}
Let $\G =(\V,\E)$ be an undirected graph and $S \subseteq \V$ be a set of nodes.

\noindent \textbf{Graph Density:}
The density of a graph $G(V, E)$ is defined as $|E|/|V|$.

\noindent \textbf{SubGraph Density:}
The density of a subgraph defined by a subset of nodes $S$ of $V(G)$ is defined as its induced density. We will use $den(S)$ to denote the density of the subgraph induced by $S$. Therefore, $den(S) = \frac{|E(S)|}{|S|}$. Here $E(S)$ is the subset of edges $(u, v)$ of $E$ where $u\in S$ and $v\in S$.  In particular, when talking about the density of a subgraph defined by set of vertices $S$ induced on $G$, we use the notation $den_G(S)$. However, when clear from context, we omit the subscript $G$.

\noindent \textbf{Edge Expansion:} 
We denote $ \overline{S} = V - S$. Let $|\E|_{S, \overline{S}} = \{ (u,v) \in \E | u \in S, v \in \overline{S} \}$ be the number of edges crossing the cut $(S, \overline{S})$.  We define the {\em volume} of $S$ to be the sum of the degrees
of the vertices in $S$ as $vol(S) = \sum_{x \in S}degree(x)$. The edge expansion of the graph $\h_G$ is defined as, 
 $
  \h_G =  {\mathrm{min}_{ |S| \le |\V|/2}}  \frac{|\E|_{S,\overline{S}}}{|S|}   
 $, 

\section{General Self-Healing Model}
\label{sec: Xmodel}


\begin{figure}[h!]
\caption{The Basic Self-healing(Node Insert, Delete and Network Repair) Model -- Distributed View.}
\label{algo:model-2}
\begin{boxedminipage}{0.5\textwidth}
\begin{algorithmic}
\STATE Each node of $G_0$ is a processor.  
\STATE Each processor starts with a list of its neighbors in $G_0$.
\STATE Pre-processing: Processors may send messages to and from
their neighbors.
\FOR {$t := 1$ to $T$}
\STATE Adversary deletes or inserts a node $v_t$ from/into $G_{t-1}$, forming $U_t$.
\IF{node $v_t$ is inserted} 
\STATE The new neighbors of $v_t$ may update their information and send messages to and from
their neighbors.
\ENDIF
\IF{node $v_t$ is deleted} 
\STATE All neighbors of $v_t$ are informed of the deletion.
\STATE {\bf Recovery phase:}
\STATE Nodes of $U_t$ may communicate (synchronous/asynchronous, in parallel) 
with their immediate neighbors.  These messages are never lost or
corrupted, and may contain the names of other vertices.
\STATE During this phase, each node may insert edges
joining it to any other nodes as desired. 
Nodes may also drop edges from previous rounds if no longer required.
\ENDIF
\STATE At the end of this phase, we call the graph $G_t$.
\ENDFOR
\vspace{10pt}
\hrule
\STATE
\STATE {\bf Success metrics:} 
 \begin{itemize}
\item{\bf Maintaining properties:} Maintain certain well stated invariants/ minimize certain(local/global) ``complexity'' measures.
\item{\bf Recovery time:} The maximum total time for a recovery round,
assuming it takes a message no more than $1$ time unit to traverse any edge and we have unlimited local computational power at each node.
\item{\bf Communication complexity:} Number of messages used for recovery.
\end{itemize}
\end{algorithmic}
\end{boxedminipage}
\end{figure}

\begin{figure}[h!]
\caption{Example Properties/ ``Complexity'' measures for the basic self-healing model}
\label{algo: sh-complexitymeasures}
\begin{boxedminipage}{0.5\textwidth}
\begin{algorithmic}
\STATE
Consider the graph  $G'_t$ which is the graph, at timestep $t$, consisting solely of the original nodes (from $G_0$) and insertions without regard to deletions and healings. 

 \begin{enumerate}
 
 \item{\bf Connectivity.} If $G'_t$ is connected, so is $G_t$.
 \item{\bf Degree increase.}  $\max_{v \in G_t} \frac{\degree(v,G_t)}{ \degree(v,G'_t)}$.
 \item{\bf Density.} $den_{G_t}(S) \geq den_{G'_t}(S)$.
\item {\bf Edge expansion.} $h(G_t) \ge min(\alpha,\beta h(G'_t))$; for constants $ \alpha , \beta > 0$.
\item{\bf Diameter.} diameter($G_t$)/ diameter($G'_t$).
\item {\bf Network stretch.} $\max_{x, y \in G_{t}} \frac{dist(x,y,G_{t})}{dist(x,y,G'_{t})}$, where, for a graph $G$ and nodes $x$ and $y$ in $G$, $dist(x,y,G)$ is the
length of the shortest path between $x$ and $y$ in $G$.
\end{enumerate}

\end{algorithmic}
\end{boxedminipage}
\end{figure}


This model was introduced in~\cite{HayesPODC09,Amitabh-2010-PhdThesis}. Somewhat similar models were also used in~\cite{PanduranganPODC11,HayesPODC08,SaiaTrehanIPDPS08}.
We now describe the details. 
Let $\chi$ be a self-healing algorithm.
 Let $G = G_0$ be an arbitrary graph on $n$ nodes, which represent processors in a distributed network.  In each step, the adversary either deletes or adds a node.  After each deletion, the algorithm gets to add some new edges to the graph, as well as deleting old ones.  At each insertion, the processors follow a protocol to update their information.
 $\chi$'s goal is to maintain a certain set of properties e.g. those given in Figure~\ref{algo: sh-complexitymeasures}. At the same time, the algorithm wants to minimize the resources spent on this task,  which usually also includes keeping node degree small.  



Initially, each processor only knows its neighbors in $G_0$, and is unaware of the structure of the rest of $G_0$.
After each deletion or insertion, only the neighbors of the deleted or inserted vertex are informed that
the deletion or insertion has occurred. After this, processors are allowed to communicate (synchronously or asynchronously depending on the constraints) by sending a limited number
of messages to their direct  neighbors.  We assume that these messages are always sent and received successfully.  The
processors may also request new edges be added to the graph.
We make sure that no
other  vertex is deleted or inserted until the end of this round of computation and communication has concluded.

We also allow a certain amount of pre-processing to be done before the first attack occurs.   For example, we assume that all nodes know the address
of all the neighbors of its neighbors (NoN).  Our full model is described in Figure~\ref{algo:model-2}.

\section{Edge-preserving Self Healing}
\label{edge-preserving}
 Here, we introduce edge-preservation, which we contend is a strongly desirable property for self-healing. A self-healing algorithm is edge-preserving in our model if the original edges and those inserted by the adversary are never deleted by the algorithm. More formally, we state:\\ 
\noindent \textbf{Edge Preserving:}  A self-healing algorithm $\chi$ is edge-preserving in the general self-healing model (Figure~\ref{algo:model-2}), if we have that,  for all $u, v\in V(G_t)$, if $(u, v) \in E(G'_t)$, then $(u, v) \in E(G_t)$. 

We also define the notion of an edge-monotonic property as follows:\\
\noindent \textbf{Edge-monotonic graph property/function:} Given a graph $G(V, E)$, and a subgraph $S\subseteq V(G)$, a subgraph property/function $f_G:S\rightarrow [0,\infty)$ is said to be edge-monotonic or edge-monotonically non-decreasing if for any two graphs $G_1(V, E_1)$, $G_2(V, E_2)$ with $E_1(S)\subseteq E_2(S)$, we have $f_{G_1}(S)\leq f_{G_2}(S)$. Further, the property is said to be edge-monotonically increasing if for $E_1(S)\subset E_2(S)$, we have $f_{G_1}(S) < f_{G_2}(S)$.

It is quite straightforward to maintain edge-monotonic graph properties once edge-preservation is achieved. We show this more formally in section~\ref{sec: xanalysis} for the algorithm xheal+ (described later) and the edge-monotonic property of density.

\section{Xheal+}
\label{sec: xh-algorithm}

Here, we describe \emph{xheal+}, which is the algorithm xheal~\cite{PanduranganPODC11} modified to make it edge-preserving.  The main difference from xheal is that we allow multiple `colorings' (explained more formally later) for a single edge in xheal+. This enables us to detect if the edge was originally present or inserted by the adversary in the graph and has been recolored by the algorithm. At certain points in its execution, xheal removed edges from the graph; in xheal+, if the edge was not an `original' edge, we delete the edge as in xheal, but otherwise, we simply remove the required label/color from the edge without deleting it.  The algorithm is summarized in Figure~\ref{algo: repairbyexpander}; to make it easy to understand the modifications from xheal, we have added the symbol \emph{+} to the lines we have changed. We have also rewritten some of the subroutines more clearly and added the subroutines given in Figures~\ref{algo: mergeclouds}, \ref{algo: maketopology}, \ref{algo: markedges}, and~\ref{algo: deleteedges}.

 Here, we will not describe the details of the algorithm which are already given in~\cite{PanduranganPODC11}. However, we shall very briefly summarise the algorithm for completeness and explain in more detail the enhancements in xheal+. 
  Let $\kappa$ be a fixed parameter that is implementation dependent. For the purposes of this algorithm, we assume the existence of a $\kappa$-regular expander with edge expansion $\alpha > 2$. 
 To describe the algorithm, in xheal+, we associate a set of colors  (as opposed to a single color in xheal) with each edge of the graph i.e. for an edge $e$, $e.color$ is a set of colors associated with an edge. Each color associates a property or functionality with an edge.  We  assume that the original edges of $\G$ and those added by the adversary are all colored {\bf black} initially i.e. for such an edge $e$, $e.color = \{black\}$.  If $(u,v)$ is a black (colored) edge, we say that $v$($u$) is a black (colored) neighbor of $u$($v$). In xheal+, the algorithm can later add functionality to the edge (add as part of primary or secondary clouds, as described later) by adding new colors to the set, and remove the functionality by simply removing that color.
 
  At any time step, the adversary can add a node (with its incident edges) or delete a node (with its incident edges). Addition is straightforward, the algorithm takes no action and the added edges simply get colored {\bf black} (Notice the edge did not exist before, so this will be its first color; in a model where edges may be adversarially added to previously existing nodes, this may not hold). The self-healing algorithm is mainly concerned with what edges to add when a node is deleted and this is done based on the colors of the edges deleted as well as on other factors. In brief, the neighbors of the deleted node may be all black i.e. $e.color = black$ for all edges $e$ deleted in this step, or not all black. If they are all black, the neighbors reconnect as a 'primary' cloud(Figure~ \ref{algo: makecloud}) i.e. as a $\kappa$-regular expander or as a clique if number of neighbors are less than $\kappa$ (Figure~\ref{algo: maketopology}). This cloud has its own color (e.g. it could be  the label or ID of the deleted node); if a required edge does not exist, a new one is created and it takes this color, but if the edge already exists, in xheal+, this color is added to the color set of the edge (Figure~\ref{algo: maketopology}). If the neighbors of the deleted node are not all black, this implies that the deleted node was part of at least one primary cloud (a node participates exactly once in a primary cloud for each of its deleted neighbor); we fix these primary clouds by redrawing (deleting/adding/reusing) some edges (Figure~ \ref{algo: fixprimary}) to restore them to be expanders. In xheal+, since you are not allowed to remove original edges, there may be a need or advantage in doing this more efficiently reusing existing edges. Now, we select a `free' node (explained later) from each of these primary clouds and construct a secondary cloud (Figure~\ref{algo: makesecondary}). A free node is simply a node which is not taking part in any secondary cloud, thus, a node can take part in at most one secondary cloud.   When a node is a part of a secondary cloud, it is called a bridge node for the primary cloud it represents. Each Secondary cloud also has its own distinct color and this gets added onto the edge if it's also used for secondary cloud duties. On deletion of a node, secondary cloud edges may also be lost, in which case we repair this cloud too (Figure~\ref{algo: fixsecondary}). We do this by finding a new free node for the primary cloud that lost the bridge node (i.e. the deleted node), if need be, by borrowing from neighboring primary clouds. However, some times this may not be possible, in which case, we merge all the primary and secondary clouds effected and make a new primary cloud from all their nodes (Figure~\ref{algo: mergeclouds}). Merging is an expensive operation but since it will not happen often, its cost is amortized over previous operations. 
  
   Edges may be deleted in the cloud fixing and merging operations, thus, these operations need to be edge preserving.  Also, we have to ensure the nodes can communicate while the reconstruction is underway.
   To ensure this, we have added the algorithms  \textsc{MarkEdges()}(Fig.~\ref{algo: markedges}) , \textsc{MakeTopology()}(Fig.~\ref{algo: maketopology}), and \textsc{DeleteEdges}(Fig.~\ref{algo: deleteedges}). The algorithm  \textsc{MarkEdges()} prepares the clouds for construction or repair by removing the cloud's color from the edges and marking those edges which can be safely removed (If an edge was originally present or added by the adversary, it will have the color black in its set and thus will not be marked). Notice that the edges themselves cannot be removed at this stage since these edges will form the network of communication for the present round of repair. 
   \textsc{MakeTopology} is used to make the cloud edges; it checks for existence of a required edge for the cloud being constructed, if that edge already exists, it simply adds the color of the cloud to the color set of the edge.
    If an edge does not exist between the two nodes, a new edge is constructed and the edge's color set is initialized by the color of the cloud. 
    \textsc{DeleteEdges} is the subroutine called after a new cloud is in place to clean up. It checks all the marked edges to see which ones have no color left. This means these edges were not reused and they are also not original edges and can thus be deleted.
   

\begin{figure}[h!]
\caption{\textsc{Xheal}($\G, \kappa$)}
\begin{algorithmic}[1]

 \IF{node $v$ inserted with  incident edges}
\STATE \textbf{+ $\forall$ inserted edges $e$, $e.color \leftarrow \{black \}$.}
\ENDIF

\IF{node $v$ is deleted}
\label{algoline: xh-black} \IF{\textbf{+} $\forall$ deleted edges $e$, $e.color  = \{black\} $} 
\STATE   \textsc{MakeCloud}($BlackNbrs(v), primary, Clr_{new}$)
\ELSIF{deleted colored edges are all primary}
\STATE Let $C_1, \dots, C_j$ be primary clouds that lost an edge
\STATE  \textsc{FixPrimary}($[C_{1}, \dots, C_{j}]$)
\STATE \textsc{MakeSecondary}($[C_1, \dots, C_j] \cup BlackNbrs(v)$)
\ELSE
\STATE Let $[C_1, \dots, C_j] \leftarrow$ primary clouds of $v$; $F \leftarrow$ secondary cloud of $v$; $[U] \leftarrow$  $Clouds(F) \setminus [C_1, \dots, C_{j}]$, $[C_1, \dots, C_{j'}] \leftarrow F \cap [C_1, \dots, C_{j}]  $  
\STATE  \textsc{FixPrimary}($[C_{1}, \dots, C_{j}]$)
\STATE   \textsc{FixSecondary}($F, v$)
\STATE \textsc{MakeSecondary}($[C_{j'}, \dots, C_j] \cup BlackNbrs(v)$)
\ENDIF
 \ENDIF
\end{algorithmic}
\label{algo: repairbyexpander}
\end{figure}

\begin{figure}[h!]
\caption{\textsc{FixPrimary}($[C]$)}
\begin{algorithmic}[1]
\STATE \textbf{+} \textsc{MarkEdges}($[C]$)
\FOR{each cloud $C_i \in [C]$ }
\STATE  \textsc{MakeCloud}($C_i, primary, Color(C_i)$)   
\ENDFOR
\STATE \textbf{+} \textsc{DeleteEdges}($[C]$)
\end{algorithmic}
\label{algo: fixprimary}
\end{figure}

\begin{figure}[h!]
\caption{\textsc{MakeSecondary}($[C]$)}
\begin{algorithmic}[1]
\FOR{each cloud $C_i \in [C]$ }
\IF{$FrNode_i$ = \textsc{PickFreeNode}($C_i$) == NULL} 
\STATE \textbf{+}  \textsc{MergeClouds}($[C]$)   
\STATE Return
\ENDIF
\ENDFOR
\STATE  \textsc{MakeCloud}($\bigcup FrNode_i\ \forall C_i \in [C]$, secondary, $Clr_{new}$)
\end{algorithmic}
\label{algo: makesecondary}
\end{figure}

\begin{figure}[h!]
\caption{\textsc{FixSecondary}($F, v$)}
\begin{algorithmic}[1]
\IF{$v$ is a bridge node of $C_i$ in $F$} 
\IF{$FrNode_i$ = \textsc{PickFreeNode}($C_i$) == NULL} 
\STATE \textbf{+}  \textsc{MergeClouds}($F$)   
\ELSE
\STATE \textbf{+} \textsc{MarkEdges}($F$)
\STATE  \textsc{MakeCloud}($FrNode_i \cup BridgeNode(C_j)\ \forall C_j \in [C]$, secondary, Color(F))    
\STATE \textbf{+} \textsc{DeleteEdges}($F$)
\ENDIF
\ENDIF
\end{algorithmic}
\label{algo: fixsecondary}
\end{figure}

\begin{figure}[h!]
\caption{\textsc{MakeCloud}($[V], Type, Clr$)}
\begin{algorithmic}[1]
\IF {$|V| \leq \kappa+1$}
\STATE \textbf{+} \textsc{MakeTopology($[V], Type, Clr, clique$)}
\ELSE
\STATE \textbf{+} \textsc{MakeTopology($[V], Type, Clr,\kappa-reg expander $)}
\ENDIF
\end{algorithmic}
\label{algo: makecloud}
\end{figure}

\begin{figure}[h!]
\caption{\textbf{+} \textsc{MergeClouds}($[C]$)}
\begin{algorithmic}[1]
\STATE  \textsc{MarkEdges}($[C]$)
\STATE \textsc{MakeCloud}($Nodes([C], Primary, Clr_{new}$)
\STATE \textsc{DeleteEdges}($[C]$)
\end{algorithmic}
\label{algo: mergeclouds}
\end{figure}

\begin{figure}[h!]
\caption{\textsc{\textbf{+} MakeTopology}($[V], Type, Clr, Top$)}
\begin{algorithmic}[1]
\STATE Design graph $T([V], E)$ of topology $Top$ \COMMENT{The nodes make the 'blueprint', then implement it}
\FOR{ each edge $e \in E$}
\IF{$e$ existed previously}
\STATE $e.color \leftarrow e.color \cup Clr$, $e.type \leftarrow Type$ \COMMENT{Reuse Edge}
\ELSE
\STATE Make new edge $e$; set $e.color \leftarrow Clr$, $e.type \leftarrow Type$
\ENDIF
\ENDFOR
\end{algorithmic}
\label{algo: maketopology}
\end{figure}

\begin{figure}[h!]
\caption{\textbf{+} \textsc{MarkEdges}($[C]$)}
\begin{algorithmic}[1]
\FOR{each cloud $C_i \in [C]$ }
\FOR{ each edge $e \in edges(C_i)$}
\STATE $e.color \leftarrow e.color \setminus color(C_i)$ \COMMENT{Remove edge from Cloud.}
\IF{$e.color = NULL$}
\STATE Mark $e$ for deletion \COMMENT{Not Original Edge; possibly remove at end of phase}
\ENDIF
\ENDFOR
\ENDFOR
\end{algorithmic}
\label{algo: markedges}
\end{figure}

\begin{figure}[h!]
\caption{\textbf{+} \textsc{DeleteEdges}($[C]$)}
\begin{algorithmic}[1]\FOR{all marked edges $e \in edges([C])$}
 \IF{$e.color = NULL$} 
  \STATE Delete edge \COMMENT{safe to remove edge now.}
  \ELSE
  \STATE Unmark edge
\ENDIF
\ENDFOR
\end{algorithmic}
\label{algo: deleteedges}
\end{figure}

\begin{figure}[h!]
\caption{\textsc{PickFreeNode}()}
\begin{algorithmic}[1]

\STATE Let a Free node be a primary node without secondary duties
\IF{Free node in my cloud}
\STATE Return Free node
\ELSE
\STATE Ask neighbor clouds; if a free node found, return node, else return NULL
\ENDIF
\end{algorithmic}
\label{algo: pickfreenode}
\end{figure}

\section{Analysis of Xheal+}
\label{sec: xanalysis}

Our main claim is that xheal+,  the edge-preserving version of xheal, improves xheal
by giving not only  the same self-healing guarantees but also self-healing monotonic subgraph induced properties such as subgraph density.  Recall the definition of an edge-monotonic graph property/function (Section~~\ref{edge-preserving}).
It is easy to see that $f_G(S)$ as the density function $den_G(S) = \frac{|E(S)|}{|S|}$ is an edge-monotonically increasing graph property. We now return to stating the main theorem of our paper in full generality and then turn to proving it formally. The proof only focuses on density, for ease of presentation. It is self-evident why similar guarantees translate to various monotonic graph properties. 

\begin{theorem}
\label{th: xmain}
The algorithm xheal+:
 \begin{enumerate}
  \item provides the same self-healing guarantees as xheal.
   \item further, for graph $G_t$(present graph) and graph $G'_t$ (of only original and inserted edges), at any time t, where a timestep is an insertion or deletion followed by healing:
For all $S \subseteq V(G_t)$, and any edge-monotonically non-decreasing function $f_G$, we have $f_{G_t}(S)\geq f_{G'_t}(S)$. In particular, we have $den_{G_t}(S) \geq den_{G'_t}(S)$.
   \end{enumerate}
\end{theorem}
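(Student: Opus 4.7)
The plan is to handle the two parts of the theorem in tandem, both resting on one structural observation: on any adversary run, the edge set maintained by xheal+ at each time step is a superset of the edge set that xheal would maintain on the same run, with the extra edges being precisely the black (original or adversary-inserted) edges that xheal's cleanup phase would have discarded. This reduces the theorem to (i) verifying edge-preservation directly from the pseudocode, (ii) checking that the xheal analysis is robust to the addition of these extra edges, and (iii) deriving the density/monotonicity bound as an immediate consequence.

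First I would verify edge-preservation formally. The only place edges are actually removed in the pseudocode is inside \textsc{DeleteEdges} (invoked from \textsc{FixPrimary}, \textsc{FixSecondary}, and \textsc{MergeClouds}); an edge $e$ is deleted only if it was marked by \textsc{MarkEdges} and still satisfies $e.color = \emptyset$ when \textsc{DeleteEdges} runs. But \textsc{MarkEdges} strips only the current cloud's color from $e.color$, and every original or adversary-inserted edge carries the color black, assigned at initialization and when a new node is inserted. Since no subroutine ever removes black from any edge, we always have $\text{black} \in e.color$ for such edges, so they are never marked and never deleted, establishing the edge-preservation property.

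For part 1, I would argue that the high-level cloud-construction logic of xheal+ is identical to xheal (same topology in \textsc{MakeCloud}/\textsc{MakeTopology}, same free-node selection, same merge triggers); the only runtime differences are that edges coinciding with preserved black edges are reused rather than freshly instantiated, and that black edges stripped of all their cloud colors survive rather than being purged. Hence the edge set produced by xheal+ contains, as a subset, the edge set xheal would produce on the same adversary run. Edge expansion (in the $|S|$-normalized definition of $h$ used in this paper), connectivity, diameter, and stretch therefore can only improve with the extra edges, since adding edges never decreases $h$, never disconnects, and never increases shortest-path distances. The only guarantee that might worsen is the degree bound: for any node $v$, its degree in xheal+ is at most its degree in the corresponding xheal execution plus the number of retained black edges at $v$, which in turn is at most $\degree(v, G'_t)$. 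Since xheal already bounds $\degree(v, G_t)$ by a constant multiple of $\degree(v, G'_t)$, xheal+ preserves this up to an additive increase of $1$ in the multiplier.

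For part 2, fix any $S \subseteq V(G_t)$ and any edge $(u,v) \in E_{G'_t}(S)$. By definition of $G'_t$ the edge is either original or adversary-inserted, hence colored black in xheal+. Both endpoints lie in $V(G_t)$, so neither has been removed by the adversary, and consequently $(u,v)$ has not been destroyed by a node-deletion event. By edge-preservation, $(u,v) \in E(G_t)$, and so $E_{G'_t}(S) \subseteq E_{G_t}(S)$. Edge-monotonicity of $f$ then yields $f_{G_t}(S) \geq f_{G'_t}(S)$, and specializing to $f = den$ (which is edge-monotonically increasing) gives $den_{G_t}(S) \geq den_{G'_t}(S)$. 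I expect the main obstacle to lie not in the density half, which is genuinely immediate from edge-preservation, but in the careful bookkeeping required to port the amortized \textsc{MergeClouds} analysis of xheal to xheal+: one must check that the potential argument bounding the total number of merges counts cloud-participations via color tags rather than physical edges, which is precisely the intent of the single-color-to-color-set modification; once verified, the entire xheal analysis carries over unchanged.
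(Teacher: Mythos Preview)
Your proposal is correct and follows essentially the same route as the paper: establish edge-preservation by inspecting \textsc{MarkEdges}/\textsc{DeleteEdges}, deduce that xheal+'s edge set is a superset of xheal's, conclude that stretch/expansion/connectivity can only improve, and obtain the density (and general edge-monotone) bound from $E_{G'_t}(S)\subseteq E_{G_t}(S)$. The one noticeable divergence is in the degree bound: the paper does not argue ``xheal degree plus retained black edges'' as you do (which yields a $(\kappa+1)\cdot\degree_{G'_t}(v)+\kappa$ bound), but instead observes that the original xheal degree proof never relies on deleting black edges and so transfers verbatim to xheal+, recovering the identical $\kappa\cdot\degree_{G'_t}(v)+\kappa$ bound. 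Your concern about porting the amortized \textsc{MergeClouds} analysis is valid but is handled in the paper exactly as you anticipate---by tracking cloud membership via color tags rather than physical edges---so no new argument is needed there.
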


Expanding the above theorem using~\cite{PanduranganPODC11}, we get the main theorem on the guarantees that Xheal+ provides on the topological properties of the healed graph, assuming  Xheal+ is able to construct a $\kappa$-regular expander (deterministically), for a fixed constant $\kappa > 0$.

\begin{theorem}
\label{th:main}
 For graph $G_t$(present graph) and graph $G'_t$ (of only original and inserted edges), at any time t, where a timestep is an insertion or deletion followed by healing:
 \begin{enumerate}
  \item \label{thpart: stretch} For any two nodes $u,v \in G_t$,  $\delta_{G_t}(u,v) \le \delta_{G'_t}(u,v) O(\log n)$, where $\delta(u,v)$ is the shortest path between $u$ and $v$,
  and $n$ is the number of nodes in $G_t$.
   \item  \label{thpart: exp} $h(G_t) \ge min(\alpha, h(G'_t))$, for some fixed constant  $\alpha \ge 1$, where $h(G)$ is the edge expansion of a graph $G$
   \item  \label{thpart: cond} $\lambda(G_t) \ge min(A, B)$, 
   where $\lambda(G_t)$ is the second smallest eigenvalue of the Laplacian of $G_t$,  $A = \Omega\left(\lambda(G'_t)^2 \frac{d_{min}(G'_t)}{(\kappa)^2(d_{max}(G'_t))^2}\right)$, $B = \Omega\left(\frac{1}{(\kappa d_{max}(G'_t))^2}\right)$, 
    $d_{min}(G'_t)$ and $d_{max}(G'_t)$ are the minimum
   and maximum degrees of $G'_t$.
    \item  \label{thpart: deg} For all $x \in G_t$, $degree_{G_t}(x) \le \kappa.degree_{G'_t}(y) + \kappa$, for a fixed constant $\kappa> 0$.
     \item  \label{thpart: density} For all $S \subseteq V(G_t)$, and any edge-monotonically non-decreasing function $f_G$, we have $f_{G_t}(S)\geq f_{G'_t}(S)$. In particular, we have $den_{G_t}(S) \geq den_{G'_t}(S)$.
    \end{enumerate}
\end{theorem}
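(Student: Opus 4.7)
The plan is to factor Theorem~\ref{th:main} into two nearly independent halves: parts~\ref{thpart: stretch}--\ref{thpart: deg} are inherited from the Xheal analysis of~\cite{PanduranganPODC11}, while part~\ref{thpart: density} is the new guarantee on edge-monotonic properties. Both halves hinge on a single structural invariant---edge-preservation: at every time $t$, every edge of $G'_t$ is present in $G_t$. I would first establish this invariant, then derive part~\ref{thpart: density} as a two-line corollary, and finally reduce parts~\ref{thpart: stretch}--\ref{thpart: deg} to Xheal's proofs by exhibiting a correspondence between the cloud topologies of the two algorithms.

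To prove edge-preservation I would induct on $t$. Any original or adversary-inserted edge is born carrying the color $\{\mathrm{black}\}$. The only subroutine that physically removes an edge is \textsc{DeleteEdges}, which deletes $e$ only when $e.color = \emptyset$. Inspection of the code shows that \textsc{MarkEdges} only strips a specific cloud color $color(C_i)$ from an edge's color set and \textsc{MakeTopology} only unions new colors in; no routine ever removes black. Hence a black edge's color set is never emptied, and a short case analysis over the three branches of \textsc{Xheal} in Figure~\ref{algo: repairbyexpander}, together with the calls made inside \textsc{FixPrimary}, \textsc{FixSecondary}, and \textsc{MergeClouds}, closes the induction.

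Part~\ref{thpart: density} then follows immediately. For any $S \subseteq V(G_t)$, edge-preservation gives $E_{G'_t}(S) \subseteq E_{G_t}(S)$, so edge-monotonicity of any admissible $f_G$ yields $f_{G'_t}(S) \le f_{G_t}(S)$; specialising to $f_G(S) = |E(S)|/|S|$ gives the density statement.

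For parts~\ref{thpart: stretch}--\ref{thpart: deg} I would show that the \emph{logical} cloud structure maintained by Xheal+ is identical to the one Xheal would have built: both algorithms call \textsc{MakeCloud} on the same vertex sets with the same target topology, and they differ only in whether an existing edge is reused by adding a color or created afresh. Consequently the subgraph of $G_t$ formed by edges carrying at least one cloud color is isomorphic to Xheal's output graph, and because $G_t$ is a supergraph of this subgraph---extra original edges can only shorten shortest paths, enlarge cuts, and non-decrease the Laplacian spectral gap---parts~\ref{thpart: stretch}--\ref{thpart: cond} port over verbatim from~\cite{PanduranganPODC11}. The one place care is needed is the degree bound in part~\ref{thpart: deg}: I would bound cloud edges at a node by $2\kappa$ (at most one primary and one secondary cloud, each of size at most $\kappa+1$) and original edges by $\deg_{G'_t}(x)$, giving $\deg_{G_t}(x) \le \deg_{G'_t}(x) + 2\kappa$, which fits the stated form after absorbing constants. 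The main obstacle will be verifying the case analysis for edge-preservation through \textsc{MergeClouds}, which aggregates many clouds in one step and is the subroutine where Xheal itself is most aggressive about tearing down old cloud edges, and reconciling the Xheal/Xheal+ cloud correspondence when a single physical edge simultaneously carries several cloud colors.
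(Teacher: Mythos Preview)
Your overall structure mirrors the paper's: establish edge-preservation by inspecting \textsc{MarkEdges}/\textsc{MakeTopology}/\textsc{DeleteEdges}, derive part~\ref{thpart: density} as an immediate corollary, and inherit parts~\ref{thpart: stretch}--\ref{thpart: cond} from Xheal by observing that Xheal+ produces a supergraph of what Xheal would build. The paper's treatment of parts~\ref{thpart: stretch}--\ref{thpart: cond} is in fact terser than yours---it simply notes that more edges can only help stretch and expansion, and that the bound in part~\ref{thpart: cond} is phrased in terms of $d_{\min}(G'_t)$ and $d_{\max}(G'_t)$ rather than the edge set of $G_t$---so your cloud-isomorphism correspondence is more structure than is strictly needed, though not wrong in spirit. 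One caution: your claim that extra edges ``non-decrease the Laplacian spectral gap'' is not true in general for $\lambda_2$; the paper avoids this by pointing out that part~\ref{thpart: cond} is a statement about degree parameters of $G'_t$, not a monotonicity claim about $\lambda_2$ under edge addition.

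The genuine gap is in your argument for part~\ref{thpart: deg}. You assert that a node lies in at most one primary cloud and one secondary cloud, and from this conclude an additive bound $\deg_{G_t}(x) \le \deg_{G'_t}(x) + 2\kappa$. The secondary-cloud part is fine, but the primary-cloud claim is false: in Xheal/Xheal+, a node $x$ joins a \emph{separate} primary cloud for \emph{each} neighbor of $x$ that the adversary deletes (the paper states this explicitly in the algorithm description). Hence $x$ may simultaneously belong to as many as $\deg_{G'_t}(x)$ primary clouds, each contributing up to $\kappa$ incident edges, which is exactly why the bound is multiplicative, $\kappa\cdot\deg_{G'_t}(x) + \kappa$, rather than the additive form you derive. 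The paper proves this as its Lemma~\ref{lemma: degree} via a case analysis on (i) $x$ losing a black edge, (ii) $x$ losing a non-black edge during cloud reconstruction, and (iii) $x$ becoming a bridge node; you will need that per-deleted-neighbor accounting in place of the one-primary-cloud shortcut.
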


\begin{IEEEproof}
Observe that a graph healed by xheal+ can only have more edges than one healed by xheal, since certain edges are prohibited to be deleted in xheal+. Also, they are identical algorithms in other aspects. Since both stretch can only be lower and expansion can only be higher if the edges are more,  Parts \ref{thpart: stretch} and \ref{thpart: exp} follow (since these are proven to hold for xheal). Part~\ref{thpart: cond} is not affected since it is a statement bounding the conductance of the graph in relation to the minimim degree and maximum degree of nodes.  We only need to worry about parts~\ref{thpart: deg} and \ref{thpart: density} . These follow from Lemmas \ref{lemma: degree} and \ref{lemma: density-lb}.

\end{IEEEproof}

\subsection{Sub-graph Density Analysis via Edge Preserving Property}

In Theorem~\ref{th:main}, we claim that for all $S \subseteq V(G_t)$, $den_{G_t}(S) \leq den_{G'_t}(S)$. We initiate the proof of this lemma via a significantly stronger, and independently desirable, property that we call edge-preservering property: 

\noindent \textbf{Edge Preserving:} For all $u, v\in V(G_t)$, if $(u, v) \in E(G'_t)$, then $(u, v) \in E(G_t)$. 

We state and prove this in the following lemma.

\begin{lemma}
\label{lemma: edgepreserving}
xheal+ is edge-preserving.
\end{lemma}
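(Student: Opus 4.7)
The plan is to track every place in xheal+ at which an edge is actually removed from the graph, and to show that none of these removals can affect a ``black'' edge, where black is the label attached to every edge of $G_0$ and to every edge added when the adversary inserts a node. Since by the definition of $G'_t$ these are exactly the edges of $G'_t$ (restricted to endpoints currently in $V(G_t)$), this suffices: for $u,v \in V(G_t)$ with $(u,v) \in E(G'_t)$, both endpoints survive, the edge is black, and we will show that black edges are never removed.

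First I would establish the invariant that once $\emph{black}\in e.color$ holds, it holds forever. The only line in xheal+ that ever removes a color from $e.color$ is $e.color \leftarrow e.color \setminus color(C_i)$ inside \textsc{MarkEdges}, and by construction $color(C_i)$ is a cloud identifier (the ID of a deleted node, or a freshly minted $Clr_{new}$), which is distinct from black. Hence that line cannot strip the black label from any edge. Edge colors are only ever added elsewhere (by \textsc{MakeTopology} when it reuses an existing edge), so the set $e.color$ of a black edge always contains black.

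Next I would argue that \textsc{DeleteEdges} is the only routine that physically removes an edge, and it removes $e$ only if $e$ is marked and $e.color = \mathrm{NULL}$. Marking happens exclusively inside \textsc{MarkEdges} immediately after the color subtraction above, and only when the resulting set is NULL. By the previous invariant, a black edge's color set is never NULL, so it is never marked and never deleted. The other edge-touching primitive, \textsc{MakeTopology}, either reuses an existing edge by adjoining a color or creates a brand-new edge; it never removes any edge.

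The main obstacle is really an enumeration chore rather than a conceptual one: one must confirm that \textsc{FixPrimary}, \textsc{FixSecondary}, \textsc{MakeSecondary}, \textsc{MakeCloud}, \textsc{MergeClouds}, and the top-level deletion/insertion handler all touch edges only through \textsc{MarkEdges}, \textsc{MakeTopology}, and \textsc{DeleteEdges}. Once this audit is complete, the black-invariant immediately gives edge-preservation. The only remaining way an edge incident to a vertex of $V(G_t)$ could vanish is the adversary deleting that very vertex, but then the endpoint is removed from $V(G_t)$ and the lemma's quantifier no longer applies, completing the argument.
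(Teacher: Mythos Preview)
Your proof is correct and follows essentially the same approach as the paper: both argue that \textsc{MarkEdges}, \textsc{MakeTopology}, and \textsc{DeleteEdges} are the only routines that touch edges, and that an edge carrying the black label can never have its color set become empty and hence is never deleted. Your version is in fact more careful than the paper's (you make explicit the invariant that black is never stripped because cloud colors are distinct from black, and you note the adversary-deletion case falls outside the quantifier), but the underlying argument is the same.
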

\begin{IEEEproof}
The proof follows directly from the algorithm. The algorithm explicitly makes sure that it never deletes an edge that was present in the original graph, or was inserted by an adversary. This is clear from 
the algorithms  \textsc{MarkEdges()}(Fig.~\ref{algo: markedges}) , \textsc{MakeTopology()}(Fig.~\ref{algo: maketopology}), and \textsc{DeleteEdges}(Fig.~\ref{algo: deleteedges}), which are the only subroutines ultimately responsible for adding or deleting edges. 
The algorithm  \textsc{MarkEdges()} marks the present edges which may possibly be affected by the reconstruction and those safe for deletion if not reused. 
   \textsc{MakeTopology} reuses existing edges if they are part of the new clouds being formed or constructs new ones initializing the edge's color  vector.  
    \textsc{DeleteEdges} is the subroutine called after a new cloud is in place to clean up. It checks all the marked edges to determine the edges which are safe to be deleted by simply checking if their color set is empty. If the edge has no color i.e. empty color set, this means it was not an original or adversary inserted edge and has not been reused, and thus, can be safely deleted.
\end{IEEEproof}

Now, we prove our main lemma about subgraph density. We are mainly concerned with lower bounding the density but in the following section,  we also put in the upper bound for completeness.

\begin{lemma}
\label{lemma: density-lb}
For all $S \subseteq V(G_t)$, $den_{G'_t}(S) \leq den_{G_t}(S) \leq \frac{\kappa.\sum_{i=1}^{|S|} deg_{G'_t}(x_i)}{|S|} + \frac{\kappa}{2}$.
\end{lemma}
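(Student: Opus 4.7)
The plan is to split the lemma into its two inequalities and prove each separately, with the lower bound coming essentially for free from the edge-preserving property and the upper bound following from the established degree bound combined with the standard handshake inequality.

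For the lower bound $\den_{G'_t}(S) \leq \den_{G_t}(S)$, I would invoke Lemma~\ref{lemma: edgepreserving} directly. Edge-preservation implies $E(G'_t) \subseteq E(G_t)$, and in particular, restricting to any subset $S \subseteq V(G_t) \cap V(G'_t)$, every edge $(u,v)$ with $u,v \in S$ that is present in $G'_t$ must still be present in $G_t$. Hence $|E_{G'_t}(S)| \leq |E_{G_t}(S)|$, and dividing both sides by $|S|$ yields the claim. This is exactly the general principle (mentioned right after the definition of edge-monotonic property) that once edge-preservation holds, every edge-monotonic $f_G$ is automatically preserved; density is simply the concrete instance.

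For the upper bound, I would use the degree guarantee from part~\ref{thpart: deg} of Theorem~\ref{th:main} (which was proved as Lemma~\ref{lemma: degree}), namely $\deg_{G_t}(x) \leq \kappa \cdot \deg_{G'_t}(x) + \kappa$ for every $x$. Combining this with the elementary handshake bound
\[
|E_{G_t}(S)| \;\leq\; \tfrac{1}{2}\sum_{x \in S} \deg_{G_t}(x),
\]
(which holds because each edge internal to $S$ is counted twice and cut edges are counted once in the right-hand sum) gives
\[
\den_{G_t}(S) \;=\; \frac{|E_{G_t}(S)|}{|S|} \;\leq\; \frac{1}{2|S|}\sum_{i=1}^{|S|}\bigl(\kappa\cdot\deg_{G'_t}(x_i)+\kappa\bigr) \;\leq\; \frac{\kappa\sum_{i=1}^{|S|}\deg_{G'_t}(x_i)}{|S|} + \frac{\kappa}{2},
\]
where the final step discards the factor of $2$ in the denominator of the first term to match the (slightly loose) bound stated in the lemma.

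There is essentially no obstacle here: the real content of the paper was the algorithmic guarantee of edge-preservation and the prior degree bound, both of which are already in hand. The only minor thing to be careful about is confirming that $G'_t$ and $G_t$ share the same vertex set outside of deleted nodes, so that the subset $S$ and the sums $\sum_{i} \deg_{G'_t}(x_i)$ are well defined; this follows from the model description in Section~\ref{sec: Xmodel} together with the convention that $S \subseteq V(G_t)$ is selected from currently surviving nodes, all of which are also in $V(G'_t)$.
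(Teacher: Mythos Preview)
Your proposal is correct and matches the paper's argument. For the lower bound you do exactly what the paper does: invoke edge-preservation (Lemma~\ref{lemma: edgepreserving}) to get $E_{G'_t}(S)\subseteq E_{G_t}(S)$ and divide by $|S|$. One small difference worth noting: the paper's proof of Lemma~\ref{lemma: density-lb} actually argues \emph{only} the lower bound and defers the upper bound to the next subsection (Lemma~\ref{lemma: density-ub}); your handshake-plus-degree-bound argument for the upper bound is precisely the argument the paper uses there, so nothing is lost or gained by folding it in here.
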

\begin{IEEEproof}
Subgraph density is defined as  $den(S) = \frac{|E(S)|}{|S|}$ for an induced subgraph of a subset  $S$ of $V(G)$. Consider any subset $S \subseteq V(G_t)$. Since xheal+ is edge-preserving and $G'_t$ contains only original or adversary inserted edges, $E_{G'_t}(S) \subseteq  E_{G_t}(S)$ and therefore $den_{G'_t}(S) \leq den_{G_t}(S) $. 
\end{IEEEproof}

\subsection{Upper bounds on Density}
 In our paper we have been mainly concerned with making sure the density does not decrease. Here, for completeness, we also study how much the density can increase. 
\begin{lemma}
\label{lemma: density-ub}
For all $S \subseteq V(G_t)$, $den_{G_t}(S) \leq den_{G'_t}(S) +  \frac{\kappa.\sum_{i=1}^{|S|} deg_{G'_t}(x_i), x_i \in S}{2|S|} + \frac{\kappa}{2}$.
\end{lemma}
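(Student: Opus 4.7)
The plan is to prove Lemma~\ref{lemma: density-ub} as an easy counterpart to the lower bound, using only the degree-preservation guarantee established earlier (Theorem~\ref{th:main}, part~\ref{thpart: deg}), which states that $degree_{G_t}(x) \le \kappa \cdot degree_{G'_t}(x) + \kappa$ for every vertex $x$. The bound in the claim has the flavor of a handshake-lemma argument combined with this per-vertex degree bound, so the proof is essentially a one-line calculation once the decomposition is chosen correctly.

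First I would fix an arbitrary $S \subseteq V(G_t)$ and write $den_{G_t}(S) = |E_{G_t}(S)|/|S|$. The natural decomposition is $|E_{G_t}(S)| = |E_{G'_t}(S)| + |E_{G_t}(S) \setminus E_{G'_t}(S)|$, since $E_{G'_t}(S) \subseteq E_{G_t}(S)$ by Lemma~\ref{lemma: edgepreserving}. The first summand contributes exactly $den_{G'_t}(S) \cdot |S|$, matching the first term on the right-hand side of the claim. So it remains to bound the ``new'' edges inside $S$, i.e.\ those added by the healing procedure.

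Next I would bound this second quantity by the handshake inequality applied to $G_t$ restricted to $S$: since every edge in $E_{G_t}(S)$ has both endpoints in $S$, we have $2|E_{G_t}(S)| \le \sum_{x \in S} degree_{G_t}(x)$, and in particular $2|E_{G_t}(S) \setminus E_{G'_t}(S)| \le \sum_{x \in S} degree_{G_t}(x)$. Substituting the degree bound from Theorem~\ref{th:main}(\ref{thpart: deg}) gives
\[
2\bigl|E_{G_t}(S) \setminus E_{G'_t}(S)\bigr| \;\le\; \sum_{x \in S} \bigl(\kappa \cdot degree_{G'_t}(x) + \kappa\bigr) \;=\; \kappa \sum_{x \in S} degree_{G'_t}(x) + \kappa|S|.
\]
Dividing by $2|S|$ and combining with the first summand $den_{G'_t}(S)$ produces exactly the inequality claimed.

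There is no real obstacle here: the lemma is a corollary of the degree guarantee already carried over from xheal, combined with the trivial handshake bound, and edge-preservation is used only to identify the ``old'' edges $E_{G'_t}(S)$ as a clean subset of $E_{G_t}(S)$ so that the decomposition is well defined. The only mild subtlety is that the bound obtained via the handshake inequality is somewhat loose (it counts all $G_t$-degree, not just degree from new edges), which is why the stated bound has coefficient $\kappa$ rather than a tighter $(\kappa-1)$; I would simply note this and not attempt to optimize.
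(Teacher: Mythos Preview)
Your proposal is correct and follows essentially the same approach as the paper: decompose the induced edges of $S$ into ``old'' edges $E_{G'_t}(S)$ plus ``new'' edges added by healing, bound the latter via the per-vertex degree guarantee $degree_{G_t}(x)\le\kappa\cdot degree_{G'_t}(x)+\kappa$ together with a handshake/double-counting argument, and divide by $|S|$. The paper's proof is less formal but invokes exactly the same two ingredients, and your remark about the looseness of the coefficient $\kappa$ matches the paper's closing observation that the worst case is $|E_{G'_t}(S)|=0$.
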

\begin{IEEEproof}
From Lemma ~\ref{th:main}, part \ref{thpart: deg} (or Lemma~\ref{lemma: degree}), it follows that for the subset $S$, if each node had the maximum degree increase and all the added edges were part of the induced subgraph $G_t(S)$, then $|E_{G_t}(S)| = |E_{G'_t}(S)| +\sum_{i=1}^{|S|} (\frac{\kappa}{2} deg(x_i) + \frac{\kappa}{2})$. In the equation,  $\frac{\kappa}{2}$ comes from the fact that each edge contributes to the degree increase of two nodes.  Dividing by $S$, the lemma follows.  Notice that the worst case comes when there was no edge between any node of $S$ in $G'_t$ i.e. $ |E_{G'_t}(S)| = 0$. 
\end{IEEEproof}

Let us consider the change in density of the present graph. 
\begin{lemma}
\label{lemma: density-graph-ub}
The density of the present graph $G_t$, $den_{G_t}(V(G_t)) \leq (\kappa + 1). den_{G'_t}(V(G_t)) + \frac{\kappa}{2}$.
\end{lemma}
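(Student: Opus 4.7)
The plan is to deduce this lemma as a direct corollary of Lemma \ref{lemma: density-ub}, applied with the specific choice $S = V(G_t)$. That substitution yields
\[
den_{G_t}(V(G_t)) \le den_{G'_t}(V(G_t)) + \frac{\kappa \sum_{x \in V(G_t)} \deg_{G'_t}(x)}{2|V(G_t)|} + \frac{\kappa}{2},
\]
so the only remaining work is to rewrite the middle term as a constant multiple of $den_{G'_t}(V(G_t))$.

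Next, I would apply the handshake identity to the subgraph of $G'_t$ induced on $V(G_t)$: each edge in $E_{G'_t}(V(G_t))$ contributes $2$ to the sum $\sum_{x \in V(G_t)} \deg_{G'_t}(x)$, so the sum equals $2|E_{G'_t}(V(G_t))|$. Plugging this in, the middle term collapses to $\kappa \cdot |E_{G'_t}(V(G_t))|/|V(G_t)| = \kappa \cdot den_{G'_t}(V(G_t))$. Combining this with the leading $den_{G'_t}(V(G_t))$ gives the $(\kappa+1)$ coefficient, the additive $\kappa/2$ carries through unchanged, and the claimed inequality drops out.

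The main (mild) obstacle I would have to justify is the handshake step in the generic case when $V(G_t)$ is a proper subset of $V(G'_t)$, which can happen because $G'_t$ is defined to ignore the adversary's deletions. Under the literal reading, $\deg_{G'_t}(x)$ also counts edges from $x$ to nodes that have since been deleted, so the sum might exceed $2|E_{G'_t}(V(G_t))|$ by a boundary term. I would address this either by interpreting $\deg_{G'_t}(x)$ as the degree of $x$ within the induced subgraph $G'_t[V(G_t)]$ (the natural reading when bounding $den_{G_t}$, since any healing edge added by the algorithm can only involve nodes still in $V(G_t)$), or by folding the boundary contribution into the additive $\kappa/2$ constant. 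Under either convention the algebra above goes through verbatim, so the lemma really is essentially a one-line specialization of the subgraph density bound of Lemma \ref{lemma: density-ub}.
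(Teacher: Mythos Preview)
Your proposal is correct and follows essentially the same route as the paper: substitute $S=V(G_t)$ into Lemma~\ref{lemma: density-ub}, then use the handshake identity $\sum_{x\in V(G_t)}\deg_{G'_t}(x)=2|E_{G'_t}(V(G_t))|$ to convert the sum-of-degrees term into $\kappa\cdot den_{G'_t}(V(G_t))$ and combine coefficients. The paper silently adopts your first reading of $\deg_{G'_t}(x)$ as the degree within the induced subgraph on $V(G_t)$, so your discussion of that subtlety is actually more careful than the original.
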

\begin{IEEEproof}
In the above statement, the l.h.s. represents the density of the present graph $G_t$ and the r.h.s. the density of the subgraph induced by the present graph $G_t$ on $G'_t$ (Note that since $G'_t$ suffers no deletions, $V(G_t) \subseteq V(G'_t)$). In Lemma~\ref{lemma: density-ub}, substituting $V(G_t)$ for $S$, we get, 
\[
den_{G_t}(V(G_t)) \leq  \frac{E_{G'_t}(V(G_t))|}{V(G_t)|}  +   \frac{\kappa.\sum_{i=1}^{|V(G_t)|} deg_{G'_t}(x_i)}{2|V(G_t)|} + \frac{\kappa}{2}\\
 \]
 Since the number of edges,  $ |E_{G'_t}(V(G_t))|$ is $\frac{\sum_{i=1}^{|V(G_t)|} deg_{G'_t}(x_i)}{2}$, we get the lemma.
\end{IEEEproof}

\subsection{Degree Analysis}

\begin{lemma}
\label{lemma: degree}
  For all $x \in G_t$, $degree_{G_t}(x) \le \kappa.degree_{G'_t}(x) + \kappa$,  for a fixed parameter $\kappa> 0$.
\end{lemma}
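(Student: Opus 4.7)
The plan is to decompose $degree_{G_t}(x)$ according to edge color. By Lemma~\ref{lemma: edgepreserving} and the design of \textsc{MakeTopology} (Fig.~\ref{algo: maketopology}), every edge incident to $x$ in $G_t$ either still carries the original black color or has a color set consisting entirely of cloud colors. Let $B_x$ count the black-bearing edges (which may additionally carry cloud colors) and $C_x$ count the edges whose color set is cloud-only. Because each distinct edge is counted exactly once in $degree_{G_t}(x)$ regardless of how many colors it holds, this gives a disjoint decomposition $degree_{G_t}(x) = B_x + C_x$, and I would bound each piece in turn.

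The black count is immediate. An edge $(x,y)$ carries the black color if and only if it is an original or adversary-inserted edge and $y$ has not yet been deleted by the adversary (edge preservation kept it, and the algorithm never introduces new black edges). Writing $D_x$ for the number of $G'_t$-neighbors of $x$ that have been deleted by time $t$, this yields $B_x = degree_{G'_t}(x) - D_x$, and in particular $B_x \le degree_{G'_t}(x)$.

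The cloud count requires tracing through Figure~\ref{algo: repairbyexpander}. A fresh primary-cloud membership for $x$ is created only in the case where $x$ is among the black neighbors of a freshly deleted node; \textsc{FixPrimary} rebuilds the topology of existing clouds and \textsc{MergeClouds} combines several into one, but neither increases the count of distinct primary clouds $x$ inhabits. Thus $x$ belongs to at most $D_x$ primary clouds at time $t$. Within any single cloud, \textsc{MakeCloud} makes either a $\kappa$-regular expander or a clique of size $\le \kappa+1$, so $x$ has at most $\kappa$ cloud-neighbors per cloud. Adding the contribution of the at-most-one secondary cloud that $x$ may belong to (again $\le \kappa$ edges) gives $C_x \le \kappa D_x + \kappa$. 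Combining: $degree_{G_t}(x) \le (degree_{G'_t}(x) - D_x) + \kappa D_x + \kappa = degree_{G'_t}(x) + (\kappa-1) D_x + \kappa$, and since $\kappa \ge 1$ and $D_x \le degree_{G'_t}(x)$ the desired inequality $degree_{G_t}(x) \le \kappa \cdot degree_{G'_t}(x) + \kappa$ follows.

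The main obstacle I expect is carefully justifying the primary-cloud count under \textsc{MergeClouds}: intuitively merging only collapses memberships rather than creating new ones, but one should verify from the pseudocode that after a merge the single resulting cloud still satisfies the per-node internal degree bound of $\kappa$ (because \textsc{MakeCloud} is re-run on the union), and that $x$'s total number of distinct primary clouds does not spike. A closely related subtlety is that xheal+'s edge preservation could \emph{a priori} inflate the degree compared with xheal, since preserved black edges could coexist with newly added cloud edges; the reason this does not happen is exactly that \textsc{MakeTopology} reuses an existing edge by augmenting its color set rather than creating a parallel edge, so preserved black edges either sit in $B_x$ or double as one of the at most $\kappa D_x + \kappa$ cloud edges already counted in $C_x$, and the inequality $B_x + \kappa D_x \le \kappa(B_x + D_x)$ absorbs the rest.
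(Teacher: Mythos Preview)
Your proposal is correct and follows essentially the same approach as the paper. The paper frames the count dynamically (each lost black edge contributes at most $\kappa$ new cloud edges, plus at most one secondary-cloud membership contributing another $\kappa$), whereas you frame it statically via the decomposition $degree_{G_t}(x)=B_x+C_x$ with $C_x\le \kappa D_x+\kappa$; the underlying combinatorics and the three key observations (primary-cloud memberships arise only from black-neighbor deletions, each cloud contributes degree $\le\kappa$, at most one secondary cloud) are identical.
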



\begin{proof}
 The proof is essentially the same as in \cite{PanduranganPODC11}. Recall that we call the original edges or the edges inserted by the adversary as black edges since they have the color black as part of their color set.  Intuitively, the same proof as that in xheal~\cite{PanduranganPODC11} holds since xheal does not depend on removing any black edges, and thus also covers the edge-preserving case. We give a brief proof counting the edges:
 
 We bound the increase in degree of any node $x$ that belongs to both $G_t$ and $G'_t$. 
The degree of $x$ in $G'_t$, $degree_{G'_t}(x)$, is simply the count of its black edges. There are three cases:

\begin{enumerate}
\item \emph{$x$ loses a black edge:} In xheal+, this can only happen when the adversary deletes a node (since xheal+ is edge preserving). Now, xheal+ may add upto $\kappa$  edges by making $x$ part of a primary or a secondary cloud. Thus, here,  $x$'s degree can increase by a factor of $\kappa$

 \item \emph{$x$ loses a edge with a non-black color:}  This can happen due to adversarial node deletion or during reconstruction by the algorithm. This node deletion initiates a reconstruction of the $\kappa$-regular primary or secondary expander cloud (during which some non-black edges may be removed). At the end of this reconstruction, $x$ remains part of the $\kappa$ (or smaller)-regular degree cloud, and thus, does not increase its degree. Notice that only time the algorithm itself deletes non-black  edges is during reconstruction and does not add any more edges in lieu of these deleted edges.


\item \emph{$x$ becomes a bridge node:} This means that $x$ takes part in a secondary cloud for the first time, either for its own primary cloud or as a borrowed member by another cloud. Since the secondary cloud itself is a $\kappa$-regular expander, $x$ gains a degree of $\kappa$. However, $x$ never takes part in more than one secondary cloud.

\end{enumerate}

Thus, from the above, we get that  $degree_{G_t}(x) \le \kappa.degree_{G'_t}(x) + \kappa$.
\end{proof}

\section{Other edge-preserving self-healing algorithms}
\label{sec: otheralgos}
 We look at two other recent self-healing algorithms, Forgiving Graph~\cite{HayesPODC09} and Forgiving Tree~\cite{HayesPODC08}. Our analysis shows that these algorithms are implicitly edge-preserving in the sense that they may not work properly if the algorithm ever deleted original or adversary added edges. At a high level, these algorithms have virtual nodes and edges, which are a counterpart of the clouds in xheal+, and real nodes and edges, which will be like black edges in xheal+.  The basic mechanism is to replace the deleted node by a 'Reconstruction Tree'  of virtual nodes and edges (i.e.  placeholder nodes simulated by the 'real' i.e. existing nodes in the network); this is shown in Figure~\ref{fig: RT} (from~\ref{HayesPODC09}. Without going into details, it is simple to use this mechanism of virtual structures instead of the marking scheme used in xheal+ to ensure edge-preservation if the underlying algorithm does not explicitly delete original edges.  Let us have a brief look at the measure of density (as an example of  an edge-monotonically non-decreasing property) for these algorithms.
 
 \begin{figure}[h!]
\centering
\includegraphics[scale=0.25]{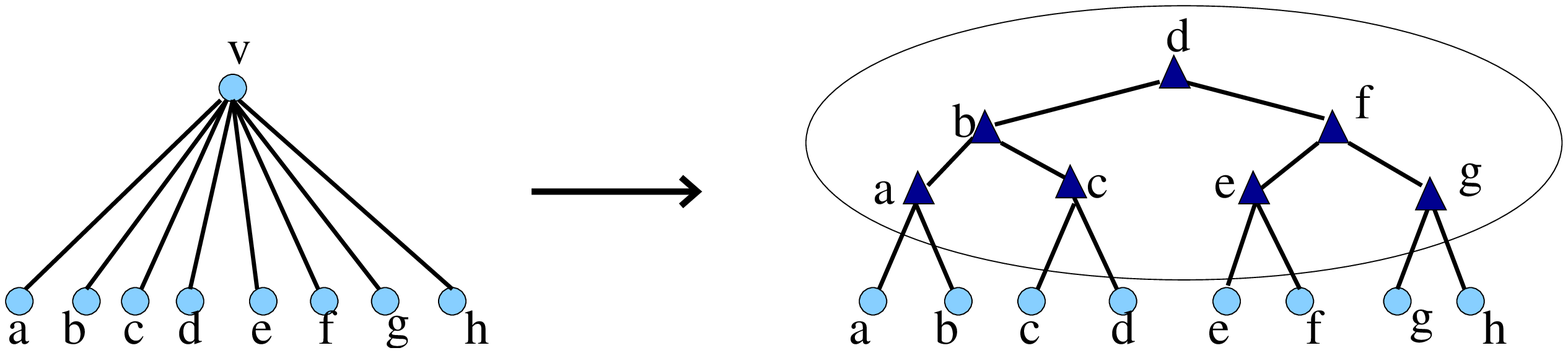}
\caption{Deleted node $v$ replaced by its Reconstruction Tree. The  triangle shaped nodes are 'virtual' helper nodes simulated by the 'real' nodes which are in the leaf layer.}
 \label{fig: RT}
\end{figure}

\subsection{Edge-preserving Forgiving Graph}
Forgiving Graph confirms to the general self-healing model(Figure~\ref{algo:model-2}) maintaining as it's success metrics connectivity, degree increase and network stretch. It has the following bound on degree increase (Theorem 1 from~\cite{HayesPODC09}):

\begin{lemma}
\label{lemma: fg-deg}
 For any node $v$ in $V(G_t)$, after any number of time steps, $t$,  $degree_ {G_t}(v) \le  3 degree_ {G'_t}(v)$.
\end{lemma}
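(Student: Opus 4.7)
The plan is to prove the bound by carefully accounting, at each node $v$, for the two kinds of incident edges in $G_t$: (i) the \emph{real} edges, which are exactly the edges of $G'_t$ incident to $v$, and (ii) the \emph{virtual} edges that $v$ acquires by participating in Reconstruction Trees (RTs) built when a neighbor (possibly a virtual helper node that $v$ is simulating) is deleted. Since Forgiving Graph is implicitly edge-preserving, the real edges contribute exactly $\deg_{G'_t}(v)$; the work is to bound the virtual contribution by $2\deg_{G'_t}(v)$.

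First I would fix the invariant maintained by the algorithm: the current graph $G_t$ decomposes into the surviving real edges of $G'_t$ plus a forest of RTs, where each RT is a balanced binary tree whose leaves are real nodes (former neighbors of some deleted vertex) and whose internal helper nodes are simulated by real nodes. Within any single RT, a real node contributes at most a constant number of virtual edges per role it plays (leaf edge to its helper parent, plus the helper edges if it is simulating an internal node). The design of the RT in~\cite{HayesPODC09} guarantees that this per-RT contribution is at most $2$ for a leaf and that the simulation load of helper nodes is distributed so that each real node simulates $O(1)$ helper nodes per RT it belongs to.

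Next I would use an amortized/charging argument to bound the number of RTs that $v$ currently appears in. The key observation is that $v$ enters a new RT only when one of its current neighbors (real or virtual) is deleted, and that the total number of such "slots" at $v$ is bounded by $\deg_{G'_t}(v)$: whenever a neighbor of $v$ in the current graph is removed, $v$ keeps exactly one edge into the replacement RT, so its degree at that vertex is preserved, not increased. By unrolling the recursion — when an RT is itself disturbed, the disturbance is charged back to the original real neighbor that caused $v$ to be involved in that RT — the total number of RT-incidences at $v$ stays within $\deg_{G'_t}(v)$. Combining these two bounds gives at most $2\deg_{G'_t}(v)$ virtual edges at $v$, hence $\deg_{G_t}(v)\le 3\deg_{G'_t}(v)$.

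The main obstacle I anticipate is the recursive/nested structure: an RT leaf can itself later be deleted, causing a new RT to be built on top of an already virtualised subgraph, and it is not immediately obvious that the amortized charge back to the original real neighbours remains bounded by a constant per original edge. Handling this cleanly requires a careful induction on $t$, where the inductive hypothesis tracks not only the degree bound but also the structural invariant that every virtual edge at $v$ can be uniquely charged to an original real edge of $v$ in $G'_t$, with each real edge absorbing at most two such charges over the entire execution.
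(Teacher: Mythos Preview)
The paper does not prove this lemma: it is quoted verbatim with the parenthetical ``(Theorem~1 from~\cite{HayesPODC09})'' and then used as a black box to derive the density bounds in Lemma~\ref{lemma: fg-density}. There is therefore no in-paper argument to compare your proposal against.

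Viewed as a reconstruction of the proof in~\cite{HayesPODC09}, your outline is the right shape: one charges every edge incident to $v$ in $G_t$ to one of $v$'s edges in $G'_t$ and shows each original edge absorbs at most a constant number of charges, using that a helper node in any Reconstruction Tree has degree at most three. One small slip in your bookkeeping: the surviving real edges of $v$ in $G_t$ contribute \emph{at most} $\deg_{G'_t}(v)$, not exactly, since neighbours of $v$ in $G'_t$ may already have been deleted. Consequently the clean split is not ``$\deg_{G'_t}(v)$ real $+$ $2\deg_{G'_t}(v)$ virtual'' but rather ``per original neighbour $w$: one edge if $w$ is alive, at most three edges (leaf-plus-simulated-helper role) if $w$ has been deleted,'' which directly yields the factor~$3$. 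Your final invariant (``each real edge absorbing at most two such charges'') undercounts in the deleted-neighbour case for the same reason. The recursive-nesting obstacle you identify is genuine and is precisely what the induction in~\cite{HayesPODC09} is designed to control; the invariant you propose to carry---that every virtual edge at $v$ is uniquely charged to an original edge of $v$ with bounded multiplicity---is the correct one, just with multiplicity three rather than two.
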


The edge-preserving property and the above lemma yield (lemma and proof similar to Lemmas~\ref{lemma: density-lb}, \ref{lemma: density-ub}, and \ref{lemma: density-graph-ub}) :

\begin{lemma}
\label{lemma: fg-density}
For all $S \subseteq V(G_t)$, $den_{G'_t}(S) \leq den_{G_t}(S) \leq den_{G'_t}(S) + \frac{3.\sum_{i=1}^{|S|} deg_{G'_t}(x_i)}{|S|} $.  For the graph density i.e. for $S = V(G_t)$, $den_{G'_t}(V(G_t)) \leq den_{G_t}(V(G_t)) \leq 3 den_{G'_t}(V(G_t)) $
\end{lemma}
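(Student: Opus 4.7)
The plan is to mirror the three-step strategy used for \texttt{xheal+} in Lemmas~\ref{lemma: density-lb}, \ref{lemma: density-ub}, and \ref{lemma: density-graph-ub}, substituting the Forgiving Graph degree bound (Lemma~\ref{lemma: fg-deg}) for the xheal+ degree bound. First I would establish as a preliminary that the edge-preserving variant of Forgiving Graph is indeed edge-preserving, by inspecting how the Reconstruction Tree of virtual nodes and edges replaces a deleted node: since only virtual edges are modified during healing, no real edge between two surviving (original or adversary-inserted) nodes is ever removed. This gives $E_{G'_t}(S) \subseteq E_{G_t}(S)$ for every $S \subseteq V(G_t)$, and dividing by $|S|$ yields the lower bound $den_{G'_t}(S) \le den_{G_t}(S)$.

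For the upper bound on an arbitrary subset $S$, I would invoke Lemma~\ref{lemma: fg-deg} and the handshake lemma. Writing $S = \{x_1, \dots, x_{|S|}\}$, the induced edge count in $G_t$ satisfies $2|E_{G_t}(S)| \le \sum_{i} degree_{G_t}(x_i) \le 3\sum_{i} degree_{G'_t}(x_i)$. To get the claimed additive form, I would further split $E_{G_t}(S) = E_{G'_t}(S) \sqcup N$, where $N$ denotes the new edges added by healing that lie inside $S$. Because each node $v \in S$ accrues at most $2\,degree_{G'_t}(v)$ additional incidences (the gap between $3\,degree_{G'_t}(v)$ and $degree_{G'_t}(v)$), counting endpoints of $N$ gives $|N| \le \sum_i degree_{G'_t}(x_i)$, which is well within the stated slack of $3\sum_i degree_{G'_t}(x_i)$. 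Dividing by $|S|$ then gives $den_{G_t}(S) \le den_{G'_t}(S) + \tfrac{3}{|S|}\sum_i degree_{G'_t}(x_i)$.

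For the graph-density specialization $S = V(G_t)$, I would plug in and use the handshake identity on $G'_t$ restricted to $V(G_t)$: since $V(G_t) \subseteq V(G'_t)$, every edge of $G'_t$ contributes to $\sum_{v \in V(G_t)} degree_{G'_t}(v)$ at most twice, while $2|E_{G'_t}(V(G_t))|$ accounts for exactly those edges with both endpoints in $V(G_t)$. Combined with $2|E(G_t)| \le 3\sum_{v \in V(G_t)} degree_{G'_t}(v)$ and a sorting of edges of $G'_t$ by whether both endpoints survive, one obtains $den_{G_t}(V(G_t)) \le 3\, den_{G'_t}(V(G_t))$ after dividing by $|V(G_t)|$.

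The main obstacle I anticipate is the preliminary step: formally verifying that the Forgiving Graph algorithm, as modified, never deletes a real edge. The original Forgiving Graph algorithm was not presented with edge preservation as an explicit invariant, so one must audit each healing action (setting up the Reconstruction Tree, simulating virtual helper nodes at real nodes, and rebalancing on subsequent deletions) to confirm that only virtual edges are rewired. Once this invariant is nailed down, the two density inequalities follow from a direct combination with the degree bound, exactly as in the xheal+ analysis, and the constants in the stated bound give more than enough room to absorb the handshake factor of $\tfrac{3}{2}$.
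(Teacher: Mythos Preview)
Your proposal is correct and follows essentially the same approach as the paper. The paper does not give a standalone proof of this lemma; it simply states that the argument is ``similar to Lemmas~\ref{lemma: density-lb}, \ref{lemma: density-ub}, and \ref{lemma: density-graph-ub}'' with the Forgiving Graph degree bound (Lemma~\ref{lemma: fg-deg}) in place of the xheal+ bound, which is exactly your three-step plan. Regarding your anticipated obstacle, the paper handles the edge-preservation of Forgiving Graph not by modifying the algorithm but by observing (in Section~\ref{sec: otheralgos}) that it is already \emph{implicitly} edge-preserving: the Reconstruction Tree mechanism only manipulates virtual nodes and edges, so real edges between surviving nodes are never touched---this is precisely the audit you sketched.
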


\subsection{Edge-preserving Forgiving Tree}

Forgiving Tree is essentially a spanning tree maintenance algorithm and confirms to the general self-healing model(Figure~\ref{algo:model-2}) except that it does not handle node insertions. Thus, the comparison graph at any time $t$ $G'_t$ is the same as the initial graph $G_0$. Forgiving Tree maintains as its success metrics connectivity, degree increase and network stretch. It has the following bound on degree increase (Adapted from Theorem 1 of~\cite{HayesPODC09}):

\begin{lemma}
\label{lemma: fg-deg}
 For any node $v$ in $V(G_t)$, after any number of time steps, $t$,  $degree_ {G_t}(v) \le  degree_ {G'_t}(v) + 3$.
\end{lemma}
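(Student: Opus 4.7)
The plan is to adapt the degree-bound proof for Forgiving Graph (Theorem 1 of~\cite{HayesPODC09}) to the tree setting, combined with the edge-preserving property and the known structural properties of Forgiving Tree's reconstruction trees (RTs). Because Forgiving Tree does not accept insertions, $G'_t = G_0$ throughout, so the comparison graph is static, which simplifies the bookkeeping considerably.

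First, I would partition the edges incident to any real node $v \in V(G_t)$ into \emph{real} edges, meaning those belonging to $E(G'_t)=E(G_0)$, and \emph{virtual} edges, meaning those added by the algorithm to simulate a deleted node via an RT. By the (implicit) edge-preserving property of Forgiving Tree observed in Section~\ref{sec: otheralgos}, $v$'s real edges coincide with its edges in $G'_t$, and hence contribute exactly $\degree_{G'_t}(v)$ to $\degree_{G_t}(v)$. It thus suffices to show that at most $3$ virtual edges are ever simultaneously incident to $v$ in $G_t$.

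Next, I would exploit the structural fact (Figure~\ref{fig: RT}) that each RT is a low-arity tree in which every surviving neighbor of the deleted node appears once as a leaf with a single virtual parent, and in which a subset of the surviving neighbors additionally simulate internal helpers, each of bounded degree (at most $3$ for the binary RT used in~\cite{HayesPODC08}). The heir-selection rule of Forgiving Tree guarantees that any real node simulates at most one helper at a time, so each RT contributes at most a constant number of virtual edges to $v$. I would then induct on the adversary's deletion sequence, maintaining the invariant ``every real node $v$ has at most $3$ virtual edges incident in $G_t$''. The critical inductive step handles the case when a real node $w$ that simulates a helper adjacent to $v$ is itself deleted: the new RT for $w$ must re-wire $v$'s existing virtual-parent attachment to $w$'s heir rather than spawn a fresh virtual edge. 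The main obstacle will be verifying precisely this re-wiring behaves as an endpoint update rather than a fresh insertion, so that nested deletions never cause accumulation of virtual edges past the additive cap of $3$; I would resolve this by arguing (in the same spirit as the original Forgiving Tree analysis) that when a helper simulator is deleted, the heir inherits the helper's virtual adjacencies wholesale, and the invariant then carries across the step.
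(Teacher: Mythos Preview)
The paper does not actually prove this lemma. It is simply stated as ``Adapted from Theorem 1 of~\cite{HayesPODC09}'' (the citation should presumably be~\cite{HayesPODC08}, the Forgiving Tree paper), and then immediately used to derive the density bounds in the following lemma. In other words, the paper's ``proof'' is a bare citation to the original Forgiving Tree degree analysis.

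Your proposal, by contrast, is a genuine reconstruction of that original argument: partitioning incident edges into real and virtual, using edge-preservation to identify the real edges with $G'_t=G_0$, and then bounding the virtual contribution by the structural invariants of the binary reconstruction tree (leaf attachment plus at most one simulated helper of degree $\le 3$), maintained inductively through the heir/re-wiring mechanism. This is the correct line of reasoning and matches the spirit of the analysis in~\cite{HayesPODC08}. So your plan is sound, but be aware that you are doing considerably more than the present paper does; for the purposes of this paper a one-line appeal to the prior theorem is all that is given or expected.
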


The edge-preserving property and the above lemma yield :

\begin{lemma}
\label{lemma: fg-density}
For all $S \subseteq V(G_t)$, $den_{G'_t}(S) \leq den_{G_t}(S) \leq den_{G'_t}(S)  + \frac{3}{2}$.  For the graph density i.e. for $S = V(G_t)$, $den_{G'_t}(V(G_t)) \leq den_{G_t}(V(G_t)) \leq  den_{G'_t}(V(G_t)) + \frac{3}{2}$.
\end{lemma}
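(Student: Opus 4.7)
\medskip

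\noindent \textbf{Proof plan.} The two inequalities come from two independent ingredients: the lower bound uses the edge-preserving property, while the upper bound uses the additive degree bound for Forgiving Tree (the preceding lemma, which gives $\deg_{G_t}(v)\le \deg_{G'_t}(v)+3$). Both are immediate analogues of what was done for Xheal+ in Lemma~\ref{lemma: density-lb} and Lemma~\ref{lemma: density-ub}, so the plan is to mimic that structure while tracking the additive rather than multiplicative degree blow-up.

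For the lower bound, I would invoke the edge-preserving property of the modified Forgiving Tree (argued in the paragraph above by appealing to the virtual-node/virtual-edge scheme of the Reconstruction Tree, which plays the role of the color-set scheme in Xheal+). Since $E(G'_t)\subseteq E(G_t)$, for any $S\subseteq V(G_t)$ we get $E_{G'_t}(S)\subseteq E_{G_t}(S)$, hence $|E_{G'_t}(S)|\le |E_{G_t}(S)|$, and dividing by $|S|$ gives $den_{G'_t}(S)\le den_{G_t}(S)$.

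For the upper bound, the plan is a degree-counting argument. Split the degree sum of $S$ in either graph into twice the internal edges plus the cut edges:
\begin{equation*}
\sum_{v\in S}\deg_{G}(v)\;=\;2|E_{G}(S)|+|E_{G}(S,\bar S)|.
\end{equation*}
Subtract the $G'_t$ version from the $G_t$ version. Edge preservation gives $|E_{G_t}(S,\bar S)|\ge |E_{G'_t}(S,\bar S)|$, so the cut term can only hurt the right-hand side, and I obtain
\begin{equation*}
2\bigl(|E_{G_t}(S)|-|E_{G'_t}(S)|\bigr)\;\le\;\sum_{v\in S}\bigl(\deg_{G_t}(v)-\deg_{G'_t}(v)\bigr)\;\le\;3|S|,
\end{equation*}
using the Forgiving Tree degree-increase lemma in the last step. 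Dividing by $2|S|$ yields $den_{G_t}(S)-den_{G'_t}(S)\le 3/2$, as claimed. The ``graph density'' statement is then just the specialization $S=V(G_t)$; I would simply remark that $V(G_t)\subseteq V(G'_t)=V(G_0)$ (Forgiving Tree handles no insertions) so the inequality applies verbatim.

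The only substantive step is verifying that the implicit edge-preserving claim for Forgiving Tree really holds: namely, that the Reconstruction-Tree mechanism never forces the algorithm to drop an edge of $G'_t$. This is the main obstacle, but it is already sketched in Section~\ref{sec: otheralgos}: virtual helper nodes and virtual edges serve as the counterparts of color-sets, so when a primary edge of $G'_t$ is ``reused'' by the reconstruction, it is retained rather than deleted. Once this is granted, the two displayed inequalities above are purely mechanical.
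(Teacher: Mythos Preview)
Your proposal is correct and follows the same two-ingredient approach the paper intends: the lower bound via edge preservation exactly as in Lemma~\ref{lemma: density-lb}, and the upper bound via the additive degree bound $\deg_{G_t}(v)\le \deg_{G'_t}(v)+3$ exactly as in Lemma~\ref{lemma: density-ub} specialized to Forgiving Tree. Your degree-sum decomposition $\sum_{v\in S}\deg_G(v)=2|E_G(S)|+|E_G(S,\bar S)|$ together with edge preservation on the cut term is a cleaner formalization of the paper's informal ``worst case is when all added edges are internal'' step, but the underlying argument is the same.
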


\section{Conclusion}


We have presented an efficient,  distributed algorithm  that withstands repeated adversarial node insertions and deletions by adding a small number of new edges after each deletion. It maintains key global invariants of the network while doing only localized changes and using only local information. Furthermore, it is edge-preserving, i.e. does not require any of the original edges to be deleted during any healing phase. This is a novel addition to all previous work and yields several desirable properties as a consequence including preserving subgraph densities. In addition, the algorithm maintains all previously studied global invariants. Firstly, assuming  the initial network was connected, the network stays connected. Secondly, the (edge) expansion of the network is at least as good as the expansion would have been without any adversarial  deletion, or is at least a constant. Thirdly,  the distance between any pair of nodes never increases by more than a $O(\log n)$ multiplicative factor than what the distance would be without the adversarial deletions.  Lastly, the above global invariants are achieved while not allowing the degree of any node to increase by more than a small multiplicative factor.

Our work opens a new line of work towards obtaining healing algorithms that respect some initial structure - beyond just certain global measures. We have shown that edge-preservation is possible and that this leads to several desirable properties. Can we go beyond this to maintain even more properties of the initial graph, such as the spectrum (with some slack)? What about preserving some notion of proximity sketches with nodes? This seems to open a new line of work. Further, the goal of maintaining edge-preserving was two fold: First to obtain structural guarantees on local and global properties, and second to minimize the cost of modifications (termination or initiation of new communication edges). For the first, can we reach a theoretical characterization of  what network properties are amenable to self-healing, especially, global properties which can be maintained by local changes? What about combinations of desired network invariants? For the latter, can the costs be modeled in  more robust and direct manner? Another interesting orthogonal question is whether there are deterministic algorithms that can yield the same bounds on all metrics as the current randomized healing algorithm. We can also extend the work to different models and domains. We can look at designing algorithms for less flexible networks such as sensor networks, explore healing with non-local edges or more complex notions of failure.



\end{document}